
\documentclass[conference]{IEEEtran}


\usepackage{amsthm, amssymb}

\def\snr{\textrm{SNR}}

\def\dint{{\rm \ d}}

\newtheoremstyle{slplain}
  {2pt}
  {2pt}
  {\slshape}
  {}
  {\bfseries}
  {.}%
  { }
  {}

\theoremstyle{slplain}

\newtheorem{cor}{Corollary}
\newtheorem{lem}{Lemma}
\newtheorem{pro}{Proposition}

\usepackage{cite}
\usepackage{amsfonts}
\usepackage{multicol,multienum}
\usepackage{subfigure}
\usepackage{multirow}

\ifCLASSINFOpdf
  \usepackage[pdftex]{graphicx}
  \graphicspath{{../pdf/}{../jpeg/}}
  \DeclareGraphicsExtensions{.pdf,.jpeg,.png}
\else
  \usepackage{float}
  \usepackage[dvips]{graphicx}
  \graphicspath{{../eps/}}
  \DeclareGraphicsExtensions{.eps}
\fi

\usepackage[cmex10]{amsmath}
\usepackage{algorithm}
\usepackage{algorithmic}
\usepackage{array}
\usepackage{url}
\usepackage{setspace}

\hyphenation{op-tical net-works semi-conduc-tor}

\begin{document}

\title{Throughput Optimal Listen-Before-Talk for Cellular in Unlicensed Spectrum}



\author{\IEEEauthorblockN{Ning Wei\IEEEauthorrefmark{1}, Xingqin Lin\IEEEauthorrefmark{2}, Wanwan Li\IEEEauthorrefmark{1},Youzhi Xiong\IEEEauthorrefmark{1}, Zhongpei Zhang\IEEEauthorrefmark{1}}
\IEEEauthorblockA{\IEEEauthorrefmark{1}National Key Laboratory of Science and Technology on Communications\\
University of Electronic Science and Technology of China,
Chengdu, Sichuan, China}
\IEEEauthorblockA{\IEEEauthorrefmark{2}Ericsson Research,
Santa Clara, California, USA\\
Contact email: wn@uestc.edu.cn}
}

\maketitle

\begin{abstract}
The effort to extend cellular technologies to unlicensed spectrum has been gaining high momentum. Listen-before-talk (LBT) is enforced in the regions such as European Union and Japan to harmonize coexistence of cellular and incumbent systems in unlicensed spectrum. In this paper, we study throughput optimal LBT transmission strategy for load based equipment (LBE). We find that the optimal rule is a pure threshold policy: The LBE should stop listening and transmit once the channel quality exceeds an optimized threshold. We also reveal the optimal set of LBT parameters that are compliant with regulatory requirements. Our results shed light on how the regulatory LBT requirements can affect the transmission strategies of radio equipment in unlicensed spectrum.
\end{abstract}

\begin{IEEEkeywords}
Listen-before-talk, licensed-assisted access, LTE unlicensed, optimal stopping.
\end{IEEEkeywords}


\section{Introduction}

There has been a surge of interest in extending cellular technologies to unlicensed spectrum that has been heavily used by wireless local area network (WLAN) technologies \cite{ieee802dot11}. The initial interest from cellular industry in unlicensed spectrum started when researchers realized that WLAN is a powerful offloading technique that can greatly help alleviate network congestion due to the exponential growth of mobile traffic demand \cite{dimatteo2011cellular, lee2013mobile}. The more recent interest has been shifted to directly adapting cellular technologies, particularly Long Term Evolution (LTE) developed under the 3rd Generation Partnership Project (3GPP), to unlicensed spectrum. Exemplary efforts include LTE-WLAN Aggregation (LWA) \cite{lwa2015wi}, Licensed-Assisted Access (LAA) \cite{laa2015wi}, LTE Unlicensed (LTE-U) \cite{lteuwebsite}, and MuLTEfire \cite{multefirewebsite}. Making use of the readily available unlicensed spectrum is a natural choice to boost cellular network capacity.

Despite the great potential of utilizing unlicensed spectrum for mobile broadband communications, many concerns have been raised. One critical concern is about how to ensure harmonious coexistence and fair sharing of unlicensed spectrum between cellular and other incumbent systems in unlicensed spectrum \cite{zhang2015lte}. To facilitate coexistence, a radio equipment operating in  unlicensed spectrum shall comply with regulatory requirements, which vary by regions. Typical regulatory requirements include maximum in-band output power, in-band power spectral density, out-of-band spurious emissions, radar protection, and channel access methods \cite{regulation3gpp}. Among the regulatory requirements, channel access methods such as the requirement of LBT are perhaps the most controversial. While not required in the regions such as United States, Korea, and India, LBT is enforced in the regions such as European Union and Japan \cite{regulation3gpp}. As a result, a truly global cellular technology in unlicensed spectrum shall incorporate LBT features.

Many research works have been triggered to study the coexistence of LTE and WLAN in unlicensed spectrum. A few studies have examined the impact of LBT strategies on the coexistence of cellular and WLAN in unlicensed spectrum \cite{zhang2015hierarchical, chen2015downlink, chen2016energy, lien2016random, ko2015fair, li2015modeling}. These studies confirm that appropriate LBT schemes help ensure fair sharing of unlicensed spectrum. Nevertheless, deeper understanding of LBT is still required and in particular it would be highly desirable to carefully examine existing LBT regulatory requirements. Such studies could potentially help regulators evolve spectrum policies and also could be useful for network operators or equipment vendors to optimize radio transmissions in unlicensed spectrum.

In this paper, we focus on LBT requirements interpreted by the European Telecommunications Standards Institute (ETSI) \cite{etsi2015en} and apply optimal stopping theory to study throughput optimal transmission strategy for load based equipment (LBE). Optimal stopping theory is about determining a time to take a given action based on causal observations to maximize an expected reward \cite{ferguson2012optimal}. Optimal stopping theory has recently been applied to study emerging problems in wireless communications and networking such as wireless power transfer \cite{yang2014dynamic},  millimeter-wave cellular systems \cite{wei2015optimal}, and energy harvesting based wireless networks \cite{li2015distributed}. Using optimal stopping theory, we show that the throughput optimal strategy is a pure threshold policy: The LBE should stop listening and transmit once the channel quality exceeds an optimized threshold. We also reveal the optimal set of LBT parameters that are compliant with the regulatory requirements.


\section{Mathematical Model and Problem Formulation}
\label{sec:model}

\subsection{Channel Access Mechanism of Load Based Equipment}


According to ETSI \cite{etsi2015en}, channel access mechanism of LBE (option B) is based on energy detection. Before a transmission or burst of transmissions on an \textit{operating channel}, a LBE shall perform a \textit{clear channel assessment} (CCA) to check whether the energy level at the input of the receiver exceeds some threshold level. The CCA observation time shall be at least $20$ $\mu s$ \cite{etsi2015en}. If the receive energy level does not exceed the threshold level, the operating channel is considered clear and the LBE may transmit immediately on the operating channel for a \textit{channel occupancy time} without re-performing CCA in the mean time. Otherwise, the operating channel is considered occupied and the LBE shall perform extended CCA (ECCA) checks. The ECCA check procedure goes as follows \cite{etsi2015en}.
\begin{enumerate}
\item The LBE randomly generates an integer in the range $1$ to $q$ and stores the value in a counter denoted by $Z$. The value $q$ can be selected by the equipment manufacturer from the range $4$ to $32$.
\item For every ECCA observation time, denoted by $T_{\textrm{ecca}}$, if the operating channel is detected to be clear, decrement the counter $Z$ by one. If the operating channel is detected to be occupied, the counter $Z$ remains unchanged.
\item Once the counter $Z$ reaches zero, the LBE may initiate transmission for a channel occupancy time without re-performing CCA or ECCA in the mean time.
\end{enumerate}

The maximum channel occupancy time, denoted by $T_{\textrm{cot}, \max}$, shall be less than $\frac{13}{32} \times q$ $m s$ \cite{etsi2015en}. After the maximum channel occupancy time, if the LBE needs another transmission, it shall go through ECCA check procedure again. It follows that the ECCA phase is always required except for the case where the initial CCA check is clear and the LBE chooses to transmit immediately. In this paper, we assume that each burst consists of many transmissions and thus the heterogeneous behavior of the first transmission may be ignored. In other words, we assume that each transmission starts directly with an ECCA phase.

\subsection{Transmission Strategies}

A LBE may transmit after it completes an ECCA phase, but it does not have to. In particular, the radio channel may be currently in deep fade, and thus it might be better off for the LBE to skip the current transmission opportunity. Motivated by this observation, we propose the following regulation compliant transmission protocol. First, the LBE performs ECCA per the regulations. Once an ECCA phase is finished, the LBE performs a further tentative probing of its link quality. Depending on the outcome of the probing, the LBE decides whether or not to proceed with data transmission.

For the link probing, the transmitter may transmit a short known preamble, based on which the receiver can measure the the link quality and feeds back the result to the transmitter. We assume that each probing process requires a time duration $\tau T_{\textrm{cot},\max}$, where $\tau \in [0, 1]$, leaving the remaining time $(1-\tau) T_{\textrm{cot}, \max}$ in the allowed channel occupancy duration for potential data transmission. Therefore, the communication process is periodic in time and each period consists of two steps: channel assessment (including ECCA and link probing) and data transmission. Accordingly, the channel occupancy time $T_{\textrm{cot}}$ is given by
\begin{align}
T_{\textrm{cot}} = \left\{ \begin{array}{ll}
         \tau T_{\textrm{cot}, \max} + (1-\tau) T_{\textrm{cot}, \max} & \mbox{if to transmit};\\
         \tau T_{\textrm{cot}, \max} & \mbox{otherwise}.\end{array} \right.
\end{align}

The outcome of an ECCA check highly depends on the radio environment. In the unlicensed spectrum, there may exist transmissions from miscellaneous sources whose activities are scenario dependent and highly unpredictable. In this paper, for analytical tractability, we assume that the outcomes of ECCA checks are independent and identically distributed (i.i.d.) across observations, and that the probability of observing a clear ECCA is $p \in (0, 1]$.

Figure \ref{fig:1} shows a sample realization of channel assessment and data transmission in a communication period. In the first ECCA phase, the counter $Z=1$ and is decremented to $0$ with a clear ECCA check. The LBE then probes the link but decides to skip the transmission opportunity. The LBE enters the second ECCA phase with $Z=2$ afterwards. After obtaining $2$ clear ECCA checks out of $4$ ECCA observation periods, the counter is decremented to $0$. The LBE probes the link again and then decides to proceed with data transmission.

\begin{figure}
\centering
\includegraphics[width=8.8cm]{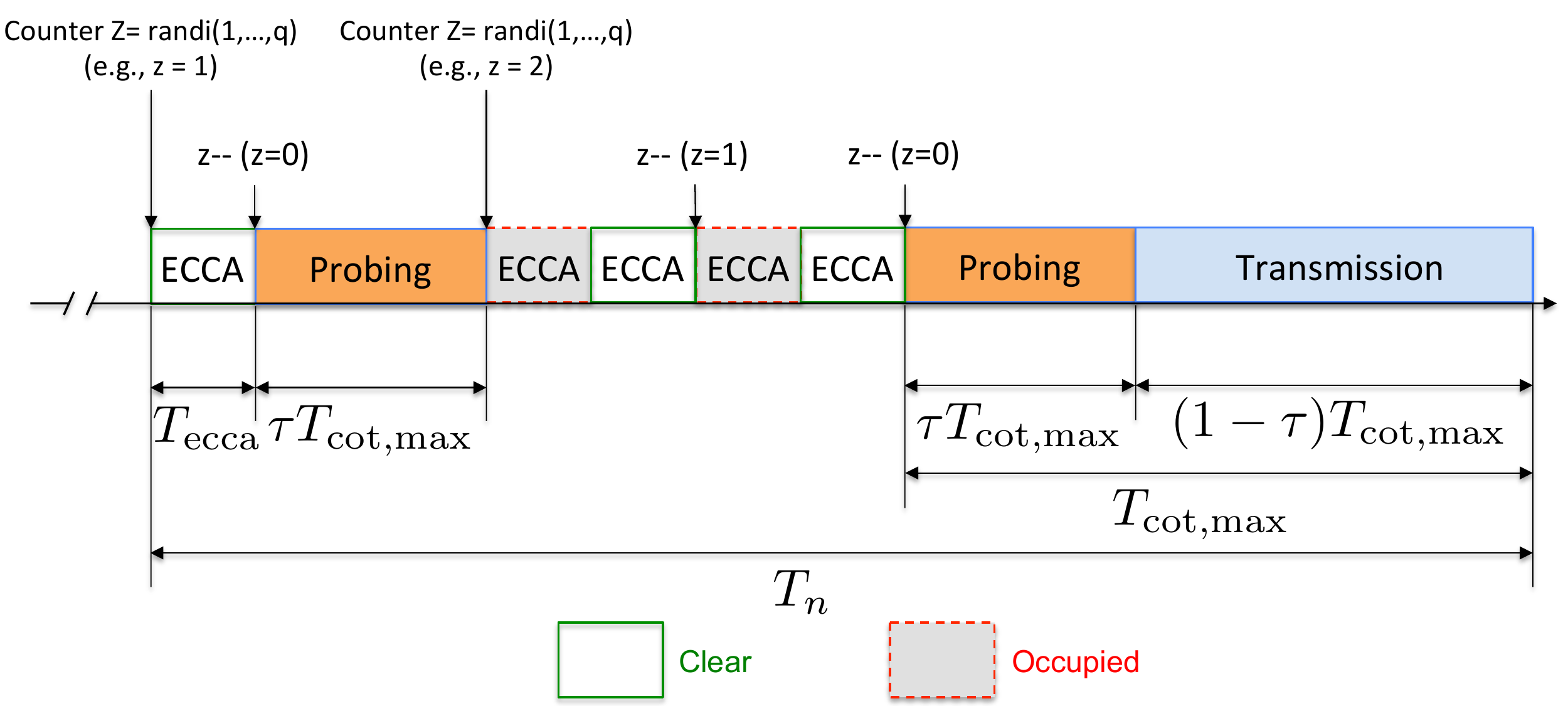}
\caption{A sample realization of channel assessment and data transmission in a communication period.}
\label{fig:1}
\end{figure}

\subsection{Problem Formulation}

Consider a typical communication period with $n$ ECCA phases, and let $T_n$ be the duration of the period.  Denote by $R_n$ the non-negative spectral efficiency (bps/Hz) of the LBE link after $n$ ECCA phases.  We assume that $R_1, R_2, ...$ are i.i.d. and denote by $R$ the generic random variable for $\{R_i\}$ with cumulative distribution function $F_R(x)$.
The spectral efficiency is a function of signal-to-noise ratio (SNR), and may be given by
$
R_n = \log ( 1 + \snr_n ).
$
The analysis in this paper however only requires that the second moment of $R_n$ exists.

If the LBE decides to proceed with data transmission after $n$ ECCA phases, the amount of bits that can be delivered equals
\begin{align}
U_n = (1-\tau) T_{\textrm{cot}, \max}  W R_{n} ,
\end{align}
where $W$ denotes the bandwidth of the operating channel.
If we repeatedly use a stopping rule across $m$ communication periods, the total number of bits that can be delivered equals
$\sum_{i=1}^m U_{n(i)}$, where $n(i)$ denotes the number of ECCA phases completed in the $i$-th period. Accordingly, the duration of the $m$ periods equals $\sum_{i=1}^m T_{n(i)}$. Therefore, the throughput (bit/s) equals $\sum_{i=1}^m U_{n(i)}/\sum_{i=1}^m T_{n(i)}$. Letting $m\to\infty$, by the law of large numbers, the ergodic throughput equals ${ \mathbb{E}[ U_N ] }/{\mathbb{E}[ T_N ]}$.

Denoting the set of admissible stopping rules by
\begin{align}
\mathcal{C}=\{N \in \mathbb{N}^+: \mathbb{E}[T_N] < +\infty\},
\end{align}
our objective is to find an optimal stopping rule $N^\star$ to obtain the maximum ergodic throughput $\lambda^\star$. Mathematically, the throughput optimization problem is written as follows.
\begin{align}
\lambda^\star \triangleq \sup_{N \in \mathcal{C}} \frac{ \mathbb{E}[ U_N ] }{\mathbb{E}[ T_N ]}.
\label{eq:3}
\end{align}

%
%
%
%
%
%
%
%
%
%
%

\section{Throughput Optimal Listen-Before-Talk}
\label{sec:analysis}


In this section we characterize the throughput optimal stopping strategy $N^\star$ and the attained maximum throughput $\lambda^\star$. To this end, we first derive the expected duration of a typical communication period under the transmission protocol in Lemma \ref{lem:1}.

\begin{lem}
The expected duration $\mathbb{E}[T_n]$ of a communication period with $n$ ECCA phases is given by
\begin{align}
\mathbb{E} [ T_n ] = T_{\textrm{cot}, \max}  + (n - 1) \tau T_{\textrm{cot}, \max} +  \frac{n(q+1)}{2p} T_{\textrm{ecca}} .
\label{eq:6}
\end{align}
\label{lem:1}
\end{lem}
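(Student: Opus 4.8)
The plan is to decompose a communication period with $n$ ECCA phases into its constituent time intervals, evaluate the expectation of each, and recombine by linearity. Concretely, under the proposed protocol a period consists of $n$ ECCA phases, each immediately followed by a link-probing interval of length $\tau T_{\textrm{cot},\max}$; after the first $n-1$ probings the LBE skips the transmission opportunity, whereas after the $n$-th probing it transmits, adding a further $(1-\tau) T_{\textrm{cot},\max}$. Hence, writing $S_i$ for the (random) duration of the $i$-th ECCA phase, $T_n = \sum_{i=1}^n S_i + (n-1)\tau T_{\textrm{cot},\max} + \bigl(\tau T_{\textrm{cot},\max} + (1-\tau) T_{\textrm{cot},\max}\bigr)$. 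The two deterministic terms already collapse to $(n-1)\tau T_{\textrm{cot},\max} + T_{\textrm{cot},\max}$, matching the first two terms of \eqref{eq:6}, so the only real work is to compute $\mathbb{E}[S_i]$.

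For $\mathbb{E}[S_i]$ I would condition on the initial counter value $Z_i$, which is uniform on $\{1,\dots,q\}$. Within a phase, each ECCA observation slot has length $T_{\textrm{ecca}}$ and independently reports a clear channel (decrementing the counter by one) with probability $p$. Therefore, conditional on $Z_i = z$, the number of slots until the counter first reaches zero is a sum of $z$ i.i.d.\ $\mathrm{Geometric}(p)$ random variables—the slot counts needed for the successive decrements—with conditional mean $z/p$ (equivalently, this count is negative binomial with mean $z/p$). Averaging over $Z_i$ and using $\mathbb{E}[Z_i] = \frac{1}{q}\sum_{k=1}^q k = \frac{q+1}{2}$ gives $\mathbb{E}[S_i] = T_{\textrm{ecca}}\,\mathbb{E}[Z_i]/p = \frac{q+1}{2p}\,T_{\textrm{ecca}}$.

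Finally, since the ECCA outcomes are i.i.d.\ across slots and the counter draws are i.i.d.\ across phases, the $S_i$ are i.i.d., so $\mathbb{E}\bigl[\sum_{i=1}^n S_i\bigr] = n\,\mathbb{E}[S_1] = \frac{n(q+1)}{2p}\,T_{\textrm{ecca}}$. Substituting into the decomposition of $T_n$ and taking expectations yields \eqref{eq:6}. I do not anticipate a substantive obstacle; the only point deserving care is the interchange of expectation with the random sum defining $S_i$, which is legitimate because, conditional on $Z_i \le q$, $S_i$ is a finite sum of integrable terms (or one may invoke Wald's identity directly), and the resulting finiteness of $\mathbb{E}[S_i]$ is also what makes the admissible class $\mathcal{C}$ in \eqref{eq:3} nonempty.
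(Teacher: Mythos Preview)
Your proposal is correct and follows essentially the same approach as the paper: decompose $T_n$ into the $n$ ECCA-phase durations plus the probing and transmission intervals, then compute the expected number of ECCA slots per phase by conditioning on the uniform counter value $Z$ (negative binomial with mean $z/p$) and averaging to obtain $(q+1)/(2p)$. The only cosmetic difference is that the paper works with the slot count $X_i$ rather than the phase duration $S_i = X_i T_{\textrm{ecca}}$.
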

\begin{proof}
Denote by $X_i$ the number of ECCA checks in the $i$-th ECCA phase.  The duration of a communication period with $n$ ECCA phases can be written as
\begin{align}
T_n &= \sum_{i=1}^n \left( X_i  T_{\textrm{ecca}}    +   \tau T_{\textrm{cot}, \max} \right) +  (1-\tau) T_{\textrm{cot}, \max}  \notag \\
&=T_{\textrm{cot}, \max}  + (n - 1) \tau T_{\textrm{cot}, \max} +  T_{\textrm{ecca}}  \sum_{i=1}^n X_i .
\label{eq:4}
\end{align}
Since the outcomes of ECCA checks are i.i.d., $\{X_i\}$ are i.i.d.. Conditioning on the counter $Z=z$, the number of ECCA checks in the $i$-th ECCA phase is distributed as
\begin{align}
\mathbb{P}(X_i = z + k) = \binom{z+k-1}{z-1} p^{z-1} (1-p)^{k} p, \quad k = 0, 1, ....\notag
\end{align}
It follows that
\begin{align}
\mathbb{E} [ X_i ] = \mathbb{E} [ \mathbb{E} [ X_i|Z ]  ]  = \mathbb{E} \left[ \frac{Z}{p} \right] =  \frac{q+1}{2p}, \quad \forall i.
\label{eq:5}
\end{align}
Plugging (\ref{eq:5}) into (\ref{eq:4}) yields (\ref{eq:6}).
\end{proof}

While Lemma \ref{lem:1} characterizes the expected duration of a communication period with $n$ ECCA phases, different stopping rules yield different numbers of ECCA phases that possibly vary across communication periods. With Lemma \ref{lem:1}, we are in a position to derive the optimal stopping rule in Proposition \ref{pro:1}.

\begin{pro}
The optimal stopping rule for the throughput maximization problem (\ref{eq:3}) is given by
\begin{align}
N^\star = \min \left\{ n \geq 1: R_n  \geq  \frac{\lambda^\star}{W} \right\},
\label{eq:12}
\end{align}
where $\lambda^\star$ is the unique maximum throughput. Further, $\lambda^\star$ is the solution to the following fixed point equation:
\begin{align}
\mathbb{E} \left[ \left( WR  - \lambda    \right)^+  \right]
= \zeta \lambda  ,
\label{eq:11}
\end{align}
where $(x)^+ \triangleq \max(x, 0)$ and
\begin{align}
\zeta \triangleq \frac{ \tau T_{\textrm{cot}, \max} + \frac{q+1}{2p} T_{\textrm{ecca}}  }{ (1-\tau) T_{\textrm{cot}, \max}   } .
\label{eq:13}
\end{align}
\label{pro:1}
\end{pro}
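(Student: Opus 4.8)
The plan is to recast the ratio problem \eqref{eq:3} as a maximal-rate-of-return optimal stopping problem and to reduce it to a one-parameter family of ordinary additive-reward stopping problems. For $\lambda\ge 0$, set $V(\lambda):=\sup_{N\in\mathcal{C}}\mathbb{E}[U_N-\lambda T_N]$. Being a supremum of affine functions of $\lambda$, $V$ is convex; since $\mathbb{E}[T_N]\ge T_{\textrm{cot},\max}+\tfrac{q+1}{2p}T_{\textrm{ecca}}>0$ for every admissible $N$ (by Lemma~\ref{lem:1}), $V$ is strictly decreasing and tends to $-\infty$ on the set where it is finite, while $V(\lambda)\ge \mathbb{E}[U_1-\lambda T_1]>0$ for all small $\lambda>0$ (using $\mathbb{E}[R]>0$; if $R\equiv 0$ the claim is trivial with $\lambda^\star=0$). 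Hence $V$ has a unique zero, which we call $\lambda^\star$. The standard rate-of-return argument then applies: for every admissible $N$, $\mathbb{E}[U_N]-\lambda^\star\mathbb{E}[T_N]=\mathbb{E}[U_N-\lambda^\star T_N]\le V(\lambda^\star)=0$, so $\mathbb{E}[U_N]/\mathbb{E}[T_N]\le\lambda^\star$; conversely, if $N^\star$ attains the supremum defining $V(\lambda^\star)$, then $\mathbb{E}[U_{N^\star}]=\lambda^\star\mathbb{E}[T_{N^\star}]$, so $N^\star$ is optimal in \eqref{eq:3} and the optimal throughput equals $\lambda^\star$. It therefore suffices to solve $V(\lambda)$ and to exhibit its optimizer for each fixed $\lambda$.

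For fixed $\lambda$, $\sup_N\mathbb{E}[U_N-\lambda T_N]$ is a time-homogeneous stopping problem: immediately after the $n$-th ECCA phase the LBE observes the i.i.d.\ variable $R_n$, and, measured from that decision epoch, stopping yields net value $(1-\tau)T_{\textrm{cot},\max}(WR_n-\lambda)$, whereas continuing costs in expectation $\lambda\big(\tau T_{\textrm{cot},\max}+\tfrac{q+1}{2p}T_{\textrm{ecca}}\big)$ --- the expected per-phase increment of $T_n$ read off from \eqref{eq:6} --- and returns the LBE to a statistically identical state. Writing $\beta_\lambda$ for the optimal value of continuing, the optimality equation reads
\[
g(r)=\max\Big\{(1-\tau)T_{\textrm{cot},\max}(Wr-\lambda),\ \beta_\lambda\Big\},\qquad \beta_\lambda=-\lambda\Big(\tau T_{\textrm{cot},\max}+\tfrac{q+1}{2p}T_{\textrm{ecca}}\Big)+\mathbb{E}[g(R)].
\]
By the existence-and-optimality theory for stopping of i.i.d.\ sequences --- where the hypothesis $\mathbb{E}[R^2]<\infty$ guarantees that $V(\lambda)$ is finite and that a one-sided threshold rule is optimal --- the optimal rule stops the first time $R_n$ exceeds the threshold $\tfrac{\lambda}{W}+\tfrac{\beta_\lambda}{(1-\tau)T_{\textrm{cot},\max}W}$ determined by $g$; moreover, substituting $\mathbb{E}[g(R)]=\beta_\lambda+\mathbb{E}\big[\big((1-\tau)T_{\textrm{cot},\max}(WR-\lambda)-\beta_\lambda\big)^+\big]$ into the equation for $\beta_\lambda$ gives $V(\lambda)=\beta_\lambda$ together with
\[
(1-\tau)T_{\textrm{cot},\max}\,\mathbb{E}\!\left[\left(WR-\lambda-\tfrac{\beta_\lambda}{(1-\tau)T_{\textrm{cot},\max}}\right)^{+}\right]=\lambda\Big(\tau T_{\textrm{cot},\max}+\tfrac{q+1}{2p}T_{\textrm{ecca}}\Big).
\]

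Combining the two steps completes the argument. Since $\lambda^\star$ is the zero of $V$ and $V(\lambda^\star)=\beta_{\lambda^\star}$, we have $\beta_{\lambda^\star}=0$; the threshold then equals $\lambda^\star/W$, which is \eqref{eq:12}, and the last display reduces, on dividing through by $(1-\tau)T_{\textrm{cot},\max}$, to $\mathbb{E}[(WR-\lambda^\star)^+]=\zeta\lambda^\star$ with $\zeta$ as in \eqref{eq:13}, which is \eqref{eq:11}. Uniqueness of $\lambda^\star$ as a solution of \eqref{eq:11} is immediate: $\phi(\lambda):=\mathbb{E}[(WR-\lambda)^+]-\zeta\lambda$ is continuous and strictly decreasing on $[0,\infty)$, with $\phi(0)=W\mathbb{E}[R]\ge 0$ and $\phi(\lambda)\to-\infty$, so it has exactly one root.

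I expect the crux to be the middle step: rigorously justifying that a one-sided threshold rule is optimal for the additive problem and that its supremum is attained (so that the reduction in the first step is legitimate), which entails verifying the hypotheses of the relevant optimal-stopping theorem --- precisely where $\mathbb{E}[R^2]<\infty$ is needed --- and carefully distinguishing sunk from future channel-occupancy time across a decision epoch so that the optimality equation is set up correctly. The algebraic collapse once $\beta_{\lambda^\star}=0$, and the monotonicity argument for uniqueness, are routine.
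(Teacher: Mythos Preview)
Your proposal is correct and follows essentially the same route as the paper: transform the ratio problem into the additive problem $V(\lambda)=\sup_N\mathbb{E}[U_N-\lambda T_N]$, exploit the i.i.d.\ structure to obtain a time-homogeneous optimality equation, deduce a threshold rule, and then set $V(\lambda^\star)=0$ to collapse the threshold to $\lambda^\star/W$ and the optimality equation to \eqref{eq:11}. The paper simply cites Ferguson's rate-of-return theorem for the reduction, whereas you sketch the convexity/monotonicity argument directly, but the substance is the same.
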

\begin{proof}
We first solve the associated ordinary optimal stopping problem.
\begin{align}
V( \lambda ) &\triangleq  \sup_{N \in \mathcal{C}}   \mathbb{E} [U_N - \lambda T_N ]  ,
\label{eq:2}
\end{align}
where $\lambda$ is an arbitrary positive number.
The duration of the first ECCA phase equals
$
\tau T_{\textrm{cot}, \max} + X_1 T_{\textrm{ecca}} .
$
If the LBE decides to transmit, it can deliver $U_1$ information bits. If instead the LBE skips this transmission opportunity and continues to listen from this point, the $U_1$ bits are not transmitted and the time $\tau T_{\textrm{cot}, \max} + X_1 T_{\textrm{ecca}}$ has passed. In the second ECCA phase, the problem starts over again, implying that the problem is invariant in time.
Therefore, without loss of generality, we can focus on the first ECCA phase to compute $V(\lambda)$.
Specifically, if the LBE stops and transmits, it obtains a utility of $U_1 - \lambda T_1 $. If the LBE continues to listen, it obtains a utility of $V(\lambda)- \lambda ( \tau T_{\textrm{cot}, \max} + X_1 T_{\textrm{ecca}})$. By the optimality equation of dynamic programming,
\begin{align}
V (\lambda) =& \mathbb{E} \left[ \max ( U_1 - \lambda T_1, V (\lambda)  - \lambda ( \tau T_{\textrm{cot}, \max} + X_1 T_{\textrm{ecca}}) ) \right] \notag \\
=& \mathbb{E} \left[ \max ( U_1  -  \lambda (1-\tau) T_{\textrm{cot}, \max},  V (\lambda)  )  \right] \notag \\
&-   \lambda \left( \tau T_{\textrm{cot}, \max} + \frac{q+1}{2p} T_{\textrm{ecca}} \right) ,
\label{eq:7}
\end{align}
where we have used $\mathbb{E}[X_1] = \frac{q+1}{2p}$ (c.f. (\ref{eq:5})) in the last equality.
Rearranging the terms in (\ref{eq:7}) yields that
\begin{align}
\mathbb{E} \left[ ( U  - \lambda  (1-\tau) T_{\textrm{cot}, \max}  - V (\lambda)  )^+  \right] \notag \\
= \lambda \left( \tau T_{\textrm{cot}, \max} + \frac{q+1}{2p} T_{\textrm{ecca}} \right) .
\label{eq:8}
\end{align}
To obtain the maximum utility $V(\lambda)$, the LBE can stop once the currently achievable utility is not less than the maximum expected utility that is obtained with continuing.
Therefore, the optimal stopping strategy is given by
\begin{align}
N^\star (\lambda)  = \min \{ n \geq 1: U_n  \geq  \lambda (1-\tau) T_{\textrm{cot}, \max} + V (\lambda) \} .
\label{eq:9}
\end{align}

By Theorem 1, Chapter 6 in \cite{ferguson2012optimal}, we know that  $N^\star$ is an optimal stopping rule that attains the maximum throughput $\lambda^\star$ in the throughput optimization problem (\ref{eq:3}) if and only if $N^\star$ is an optimal stopping rule for the ordinary optimal stopping problem (\ref{eq:2}) with $\lambda=\lambda^\star$ and $V(\lambda^\star) = 0$. Plugging $V(\lambda) = 0$ and $U = (1-\tau) T_{\textrm{cot}, \max}  W R$ into (\ref{eq:8})  yields (\ref{eq:11}).
The left side of (\ref{eq:11}) is continuous in $\lambda$ and decreasing from $\mathbb{E}[R^+]$ to zero, while the right side of (\ref{eq:11}) is continuous in $\lambda$ and increasing from $0$ to $+\infty$. Hence, there is a unique solution $\lambda^\star$. Letting $V(\lambda) = 0$ in (\ref{eq:9}) yields the optimal stopping strategy in (\ref{eq:12}). This completes the proof.
\end{proof}

Proposition \ref{pro:1} implies that the optimal stopping rule is a pure threshold policy: The LBE should take the opportunity to transmit in the operating channel once the currently achievable spectral efficiency of the operating channel exceeds $\lambda^\star/W$. The threshold $\lambda^\star/W$ increases with $\lambda^\star$. This implies that if the maximum throughput is larger, the LBE can be less aggressive in transmitting in the operating channel by using a higher stopping threshold.

Under optimal stopping, the following Corollary \ref{cor:1} readily follows from Proposition \ref{pro:1}.
\begin{cor}
The expected time of a communication period under optimal stopping is given by
\begin{align}
\mathbb{E} [ T_{N^\star} ] = (1 - \tau) T_{\textrm{cot}, \max} + \frac{\tau T_{\textrm{cot}, \max} +  \frac{ (q+1)}{2p} T_{\textrm{ecca}}}{ 1-F_R( {\lambda^\star}/{W} ) }.
\label{eq:14}
\end{align}
The expected number of bits that can be delivered in a communication period under optimal stopping equals
\begin{align}
&\mathbb{E} [ U_{N^\star} ] = (1-\tau) T_{\textrm{cot}, \max}  W \int_{\lambda^\star/W}^\infty \frac{ F_{R} (x) - F_R( {\lambda^\star}/{W} )  }{ 1-F_R( {\lambda^\star}/{W} ) } \dint x.
\label{eq:15}
\end{align}
\label{cor:1}
\end{cor}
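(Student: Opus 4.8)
The plan is to obtain both identities in Corollary~\ref{cor:1} as direct consequences of Proposition~\ref{pro:1} combined with Lemma~\ref{lem:1}, together with Wald's identity. First I would compute $\mathbb{E}[N^\star]$. Since by Proposition~\ref{pro:1} the stopping rule is $N^\star = \min\{n\geq 1: R_n \geq \lambda^\star/W\}$ and the $R_i$ are i.i.d., $N^\star$ is a geometric random variable with success probability $\mathbb{P}(R \geq \lambda^\star/W) = 1 - F_R(\lambda^\star/W)$, so $\mathbb{E}[N^\star] = 1/(1 - F_R(\lambda^\star/W))$. (One should note that Proposition~\ref{pro:1} guarantees this success probability is positive, so the expectation is finite and $N^\star \in \mathcal{C}$.)

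For \eqref{eq:14}, I would take expectations in the expression \eqref{eq:4} for $T_n$ from the proof of Lemma~\ref{lem:1}, namely $T_n = T_{\textrm{cot},\max} + (n-1)\tau T_{\textrm{cot},\max} + T_{\textrm{ecca}}\sum_{i=1}^n X_i$, evaluated at the random time $n = N^\star$. Because $N^\star$ depends only on $R_1,\ldots,R_{N^\star}$, which are independent of the ECCA-check counts $\{X_i\}$, and $\mathbb{E}[N^\star] < \infty$ while $\mathbb{E}[X_i] = \frac{q+1}{2p}$ is finite, Wald's identity gives $\mathbb{E}\left[\sum_{i=1}^{N^\star} X_i\right] = \mathbb{E}[N^\star]\frac{q+1}{2p}$. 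Substituting $\mathbb{E}[N^\star] = 1/(1 - F_R(\lambda^\star/W))$ and collecting the terms proportional to $\mathbb{E}[N^\star]$ (which are $\tau T_{\textrm{cot},\max}$ and $\frac{q+1}{2p}T_{\textrm{ecca}}$) versus the constant $(1-\tau)T_{\textrm{cot},\max}$ yields \eqref{eq:14} after routine rearrangement.

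For \eqref{eq:15}, I would use $U_{N^\star} = (1-\tau)T_{\textrm{cot},\max} W R_{N^\star}$, so it suffices to compute $\mathbb{E}[R_{N^\star}]$. Since $R_{N^\star}$ is the first sample exceeding the threshold $c \triangleq \lambda^\star/W$, it is distributed as $R$ conditioned on $\{R \geq c\}$; hence $\mathbb{E}[R_{N^\star}] = \mathbb{E}[R \mid R \geq c]$. I would then rewrite this conditional expectation using the tail-integral (layer-cake) formula: $\mathbb{E}[R_{N^\star}] = c + \int_c^\infty \mathbb{P}(R \geq x \mid R \geq c)\,\dint x = c + \frac{1}{1-F_R(c)}\int_c^\infty (1 - F_R(x))\,\dint x$. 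Writing $1 - F_R(x) = (1 - F_R(c)) - (F_R(x) - F_R(c))$ and folding the constant $c$ back inside the integral (using $\int_c^\infty \frac{(1-F_R(c)) - (1-F_R(x))}{1-F_R(c)}\dint x$ arithmetic) reproduces the integrand $\frac{F_R(x) - F_R(\lambda^\star/W)}{1-F_R(\lambda^\star/W)}$ in \eqref{eq:15}; multiplying by $(1-\tau)T_{\textrm{cot},\max}W$ finishes the proof.

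The main obstacle is the bookkeeping in the last step: one must be careful that the integral $\int_c^\infty (1 - F_R(x))\,\dint x$ is finite (which follows from the assumed existence of the second moment of $R$, hence of the first moment), and that the rearrangement matching the precise form of the integrand in \eqref{eq:15} is done correctly — it is easy to drop or misplace the additive constant $c = \lambda^\star/W$ when converting between the conditional-tail form and the stated form. Everything else is a mechanical application of Wald's identity and the geometric distribution of $N^\star$.
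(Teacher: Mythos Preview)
Your proposal follows essentially the same route as the paper's proof: both recognize that $N^\star$ is geometric with success probability $1-F_R(\lambda^\star/W)$, substitute $\mathbb{E}[N^\star]$ into the linear expression for $\mathbb{E}[T_n]$ coming from Lemma~\ref{lem:1} (your explicit appeal to Wald's identity is exactly what the paper does implicitly by invoking the lemma with $n$ replaced by $N^\star$), and then identify the law of $R_{N^\star}$ as that of $R$ conditioned on $\{R\ge \lambda^\star/W\}$ to obtain $\mathbb{E}[U_{N^\star}]$. Your caution about the bookkeeping when matching the conditional tail integral to the precise integrand displayed in \eqref{eq:15} is well placed; the paper simply writes down the conditional CDF \eqref{eq:18} and asserts the result without carrying out that rearrangement.
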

\begin{proof}
From Lemma \ref{lem:1}, $\mathbb{E} [ T_{N^\star} ] $ equals
\begin{align}
 (1 - \tau) T_{\textrm{cot}, \max} + ( \tau T_{\textrm{cot}, \max} +  \frac{ (q+1)}{2p} T_{\textrm{ecca}} ) \mathbb{E} [N^\star] .
\label{eq:17}
\end{align}
From Proposition \ref{pro:1}, the stopping time $N^\star$ (i.e., the number of ECCA phases passed before data transmission) is geometrically distributed with parameter $1-F_R( {\lambda^\star}/{W} )$, and thus
$
\mathbb{E} [N^\star]  = \frac{1}{ 1-F_R( {\lambda^\star}/{W} ) }.
$
Plugging $\mathbb{E} [N^\star] $ into (\ref{eq:17}) yields (\ref{eq:14}). Also from Proposition \ref{pro:1} the stopped random variable $R_{N^\star}$ has the following distribution:
\begin{align}
F_{R_{N^\star}} (x) = \frac{ F_{R} (x) - F_R( {\lambda^\star}/{W} )  }{ 1-F_R( {\lambda^\star}/{W} ) }, \quad x\geq {\lambda^\star}/{W},
\label{eq:18}
\end{align}
from which we can compute $\mathbb{E}[R_{N^\star}]$. Plugging $\mathbb{E}[R]$ into $\mathbb{E} [ U_{N^\star} ] =  (1-\tau) T_{\textrm{cot}, \max}  W \mathbb{E}[R_{N^\star}]$ yields (\ref{eq:15}).
\end{proof}

Though in general the maximum throughput $\lambda^\star$ does not admit a closed form,  a careful examination of the fixed point equation (\ref{eq:11}) reveals that \textit{the solution $\lambda^\star$ monotonically increases as the factor $\zeta$ decreases.} This key observation leads to the discoveries of the impact of the various system and regulation parameters on the maximum achievable throughput. Several remarks are in order.

\textbf{Remark 1.} $\lambda^\star$ increases with $T_{\textrm{cot}, \max}$ and $p$ but decreases with $T_{\textrm{ecca}}$ and $\tau$. It is intuitive that longer channel occupancy time $T_{\textrm{cot}, \max}$ or smaller link probing overhead $\tau$ yields larger throughput. If the probability $p$ of observing a clear ECCA is larger or the duration $T_{\textrm{ecca}}$ of an ECCA observation time is shorter, the duration of ECCA phases decreases, leading to increased throughput.

\textbf{Remark 2.} The dependency of $\lambda^\star$ on the maximum counter value $q$ hinges on whether or not we scale $T_{\textrm{cot}, \max}$ with $q$. If $T_{\textrm{cot}, \max}$ is fixed, choosing a smaller $q$ reduces the duration of  ECCA phases and the throughput increases. However, it is regulated that $T_{\textrm{cot}, \max}$ shall be less than $\frac{13}{32} q$ \cite{etsi2015en}. Thus, choosing a smaller $q$ also reduces the allowed channel occupancy time that may in turn reduce throughput. It is a tradeoff between ``listening'' and ``talking.''  With a given $q$, the throughput is maximized with the largest allowed channel occupancy time. Plugging the supremum $\frac{13}{32} q$ of the allowed channel occupancy times into (\ref{eq:13}) yields
\begin{align}
\zeta = \frac{ \tau  + \frac{16}{13p}T_{\textrm{ecca}} (1 + {1}/{q}) }{ 1-\tau    } .
\end{align}
It follows that choosing a larger $q$ increases the throughput. We summarize this result in Proposition \ref{pro:2}.

\begin{pro}
With the regulation constraints that $T_{\textrm{cot}, \max} < \frac{13}{32} q$ $m s$ and $4\leq q \leq 32$, the optimal throughput $\lambda^\star$ is maximized with $q=32$ and $T_{\textrm{cot}, \max} = 13 - \epsilon$ $m s$, where $\epsilon$ is an arbitrarily small positive number.
\label{pro:2}
\end{pro}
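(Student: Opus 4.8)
The plan is to leverage the monotone relationship between $\lambda^\star$ and the factor $\zeta$ recorded just before Remark~1 — namely that $\lambda^\star$ is a decreasing function of $\zeta$ — and to minimize $\zeta$ over the admissible region for the two free parameters $q$ and $T_{\textrm{cot}, \max}$. First I would fix $q$ and treat $T_{\textrm{cot}, \max}$ as the only variable. Dividing the numerator and denominator of (\ref{eq:13}) by $T_{\textrm{cot}, \max}$ gives $\zeta = \frac{\tau}{1-\tau} + \frac{(q+1)T_{\textrm{ecca}}}{2p(1-\tau)T_{\textrm{cot}, \max}}$, which is strictly decreasing in $T_{\textrm{cot}, \max}$; hence $\lambda^\star$ is increasing in $T_{\textrm{cot}, \max}$ for each fixed $q$. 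Since the regulation imposes only the strict bound $T_{\textrm{cot}, \max} < \frac{13}{32} q$ ms, the best one can do for a given $q$ is to let $T_{\textrm{cot}, \max} \uparrow \frac{13}{32} q$.

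Next I would substitute $T_{\textrm{cot}, \max} = \frac{13}{32} q$ into the simplified expression for $\zeta$ to obtain $\zeta(q) = \frac{\tau + \frac{16}{13p} T_{\textrm{ecca}}(1 + 1/q)}{1-\tau}$, which is exactly the form displayed in Remark~2. Because $1 + 1/q$ is strictly decreasing in $q$, so is $\zeta(q)$, and invoking once more the monotonicity of $\lambda^\star$ in $\zeta$ shows that the corresponding limiting throughput is increasing in $q$ over the allowed range $4 \le q \le 32$. Consequently the joint supremum of $\lambda^\star$ over all admissible pairs $(q, T_{\textrm{cot}, \max})$ is approached at $q = 32$ and $T_{\textrm{cot}, \max} \uparrow \frac{13}{32} \cdot 32 = 13$ ms.

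Finally I would address the only delicate point: since the constraint $T_{\textrm{cot}, \max} < 13$ ms is strict, the supremum is not attained, so the proposition must be read in the $\epsilon$-optimal sense. For any $\epsilon > 0$, choosing $q = 32$ and $T_{\textrm{cot}, \max} = 13 - \epsilon$ ms yields a value of $\zeta$ arbitrarily close to its infimum $\frac{\tau + \frac{16}{13p} T_{\textrm{ecca}}}{1-\tau}$, and by the continuity of the left-hand side of the fixed point equation (\ref{eq:11}) in $\lambda$ (and hence of $\lambda^\star$ as a function of $\zeta$), the resulting throughput converges to the claimed supremum as $\epsilon \to 0$. I expect no genuine obstacle here; the substantive work is merely reading off the signs of two elementary partial derivatives of $\zeta$, the subtlety being only the bookkeeping around the open regulatory constraint and the brief continuity argument that justifies passing to the limit.
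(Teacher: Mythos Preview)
Your proposal is correct and follows essentially the same approach as the paper: the paper's argument (given in Remark~2 immediately preceding the proposition) also rests on the monotonicity of $\lambda^\star$ in $\zeta$, first observes that for fixed $q$ the throughput is maximized at the largest allowed $T_{\textrm{cot},\max}$, substitutes the supremum $\frac{13}{32}q$ into~(\ref{eq:13}) to obtain the same expression $\zeta = \frac{\tau + \frac{16}{13p}T_{\textrm{ecca}}(1+1/q)}{1-\tau}$, and then reads off that larger $q$ is better. Your treatment is in fact slightly more careful than the paper's in explicitly handling the strict inequality via the $\epsilon$-optimality and continuity argument, which the paper leaves implicit.
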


Before ending this section we devise a numerical iterative algorithm to solve the equation (\ref{eq:11}). To this end, note that by definition the maximum throughput $\lambda^\star = \frac{\mathbb{E} [ U_{N^\star} ] }{ \mathbb{E} [ T_{N^\star} ]   }$. Plugging (\ref{eq:14}) and (\ref{eq:15}) into this definition yields that
\begin{align}
\lambda^\star = g(\lambda^\star),
\label{eq:16}
\end{align}
where
\begin{align}
g(x) = \frac{ W \int_{x/W}^\infty \left( F_{R} (r) - F_R( x/W ) \right)  \dint r  }{ 1-F_R( x/W )  +  \zeta }.
\end{align}
Inspired by (\ref{eq:16}), we propose the following iterative algorithm:
\begin{align}
\lambda [t + 1] = g( \lambda [t]   ) , \quad t = 0, 1, ....
\end{align}
This iterative method is a variation of Newton's method and converges for any non-negative initial value $\lambda[0]$ \cite{ferguson2012optimal}.

\section{Simulation Results}

\label{sec:sim}

In this section, we provide simulation results to validate the analytical results. An abstract link-layer model with spectral efficiency given by (\ref{eq:26}) is used in the simulation.
\begin{align}
R = \log(1 + \|h\|^2 \snr_{\textrm{avg}} ) ,
\label{eq:26}
\end{align}
where $\|h\|^2 \sim \textrm{Gamma}(k, 1)$ models fast fading and $ \snr_{\textrm{avg}}$ denotes the average received SNR. The following parameters are used unless otherwise specified: $\snr_{\textrm{avg}}=10$ dB, $k=1$, $\tau=0.1$,  $T_{\textrm{ecca}} = 20$ $\mu s$, $q=32$, $T_{\textrm{cot}, \max}= \frac{12}{32}q$ $m s$, $W=1$ MHz.

In Figure \ref{fig:2}, we study how throughput performance varies with stopping spectral efficiency threshold under different system parameters. For each set of  system parameters $(T_{\textrm{cot}, \max}, q, p)$, Figure \ref{fig:2} shows that there exists an optimal stopping threshold that achieves the maximum throughput. Simulation results in Figure \ref{fig:2} further validate Proposition \ref{pro:2}, i.e., the optimal throughput $\lambda^\star$ is maximized with $q=32$ and $T_{\textrm{cot}, \max} = \frac{12}{32}q=12$ $m s$.\footnote{Here we choose the largest integer for $T_{\textrm{cot}, \max}$ such that $T_{\textrm{cot}, \max} < \frac{13}{32}q$ $m s$, while theoretically $T_{\textrm{cot}, \max}$ can be arbitrarily close to $\frac{13}{32}q$ $m s$.} However,  it can be seen that the gap between two curves with the same $(T_{\textrm{cot}, \max}, p)$ but different $q$'s is quite small. In contrast, the gap between two curves with the same $(q, p)$ but different $T_{\textrm{cot}, \max}$'s is much larger, showing that the performance is more sensitive to $T_{\textrm{cot}, \max}$.

\begin{figure}
\centering
\includegraphics[width=8.5cm]{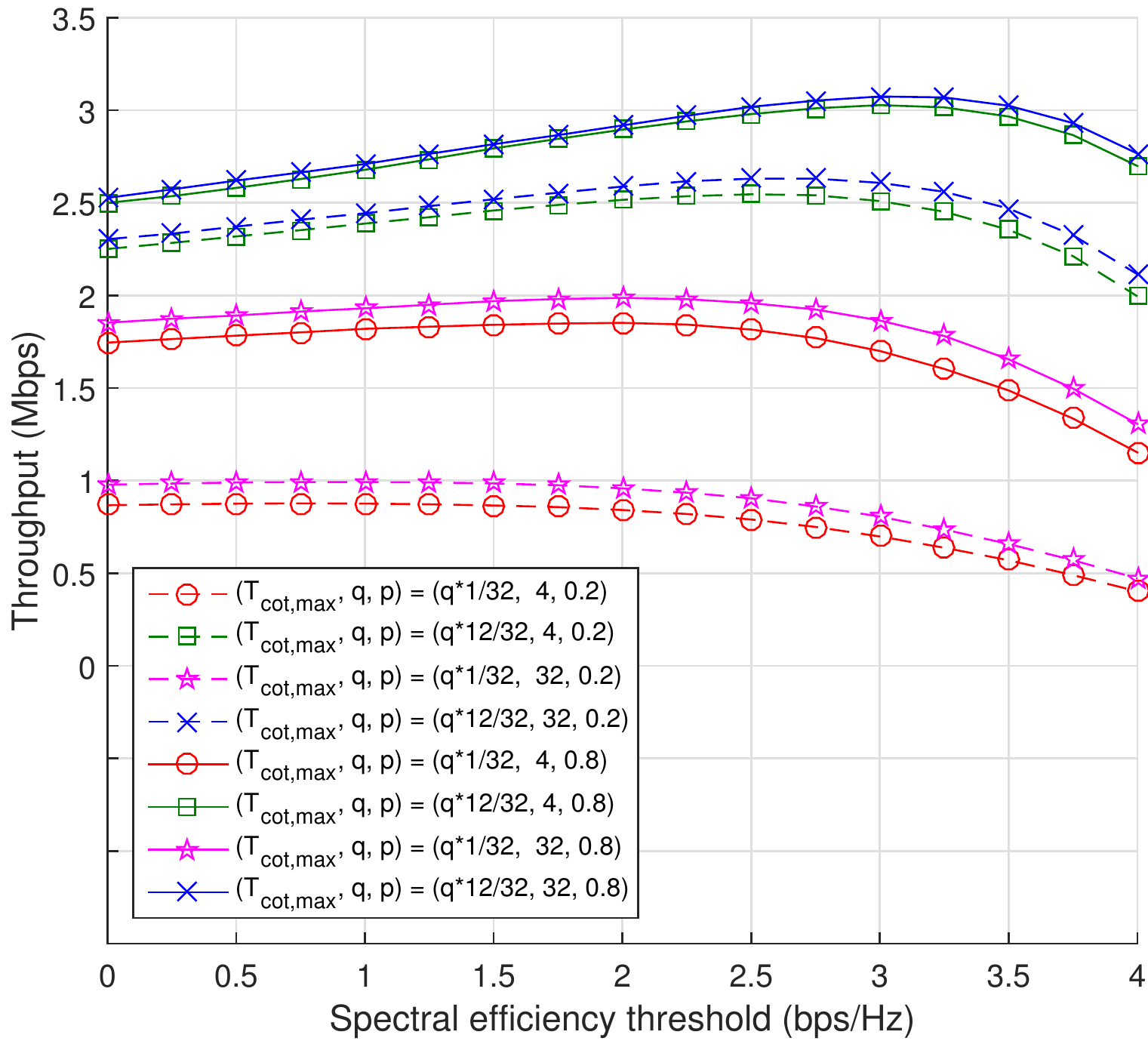}
\caption{Impact of system parameters on throughput performance.}
\label{fig:2}
\end{figure}

In Figure \ref{fig:3}, we compare the throughput performance attained by optimal stopping to the performance of a baseline transmission strategy that the LBE always transmits whenever allowed. Not surprisingly, optimal stopping outperforms the baseline scheme in the entire range of probabilities of clear ECCA check. Note that the value $k$ may be interpreted as the diversity order of the transmission. In particular, with $k=1$ the transmission experiences Rayleigh fading. Figure \ref{fig:3} shows that optimal stopping yields more throughput gains when the operating channel is more versatile (i.e., $k$ is smaller). Figure \ref{fig:3} also shows that optimal stopping yields more throughput gains when $p$ is larger. Interestingly, compared to the baseline scheme, the throughput performance under optimal stopping is much less sensitive with respect to $k$.


\begin{figure}
\centering
\includegraphics[width=8.5cm]{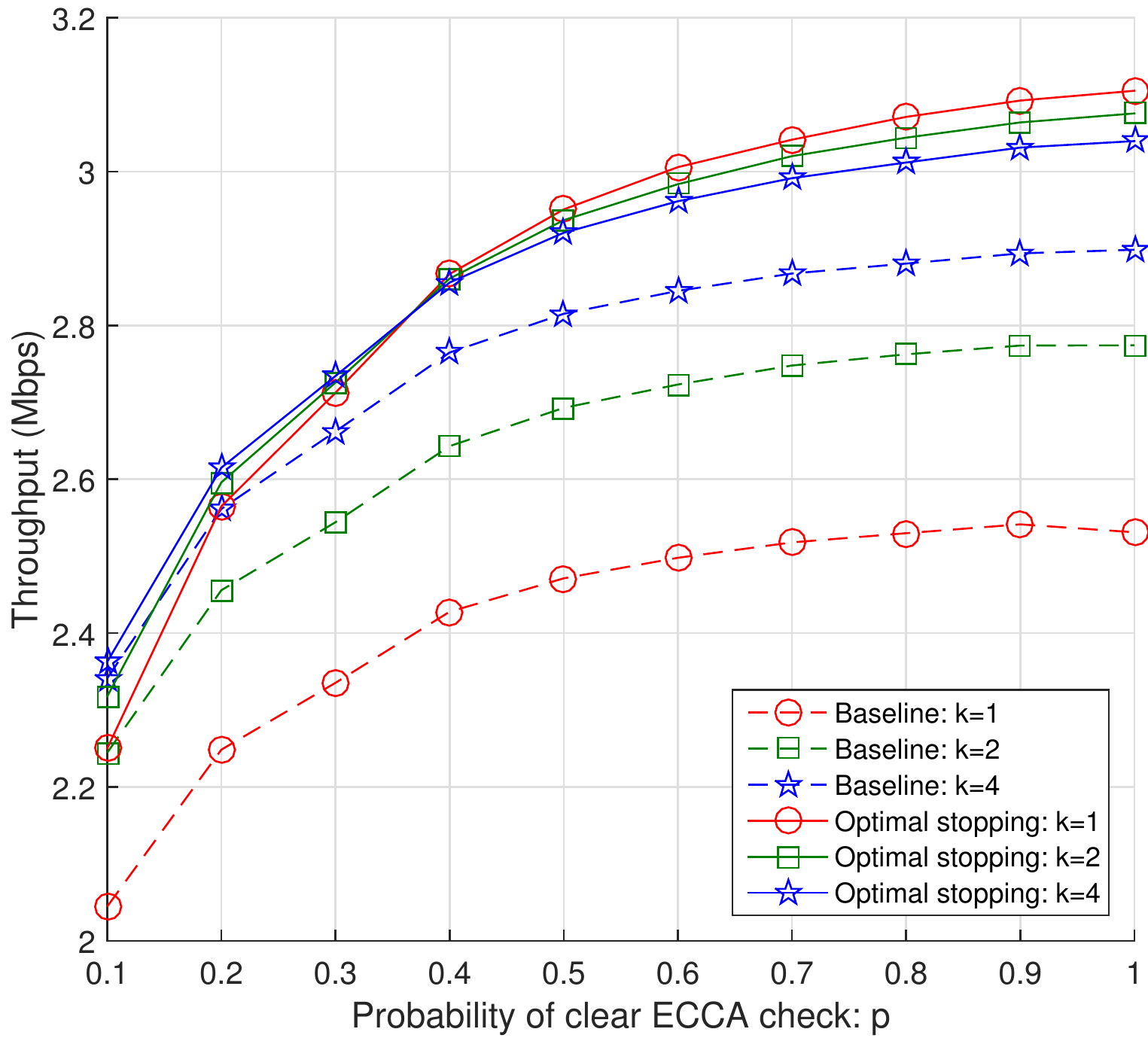}
\caption{Throughput performance: Optimal stopping versus baseline.}
\label{fig:3}
\end{figure}

\section{Conclusions}

\label{sec:conclusions}

In this paper, we have studied throughput optimal LBT transmission for load based equipment in unlicensed spectrum. We show that the throughput optimal strategy is a pure threshold policy and reveal the optimal set of LBT parameters for load based equipment. We have assumed that the outcomes of clear channel check in unlicensed spectrum are i.i.d.. Future work can consider relaxing this assumption to explore more sophisticated radio environment.


\bibliographystyle{IEEEtran}
\bibliography{IEEEabrv,Reference}

\end{document}